\newtheorem{theorem}{Theorem}[section]
\newtheorem{corollary}{Corollary}[theorem]
\newtheorem{lemma}[theorem]{Lemma}
\begin{document}

\title{Deep Neural Network for Learning to Rank Query-Text Pairs}
    \author[1,2]{Baoyang \textsc{Song}\thanks{The work was conducted while the author was doing internship at the Data \& AI Lab of BNP Paribas.}}

    \affil[1]{Department of Computer Science, \'Ecole Polytechnique}
    \affil[2]{Data \& AI Lab, BNP Paribas}

\maketitle

\begin{abstract}
This paper considers the problem of document ranking in information retrieval systems by Learning to Rank.
We propose ConvRankNet combining a Siamese Convolutional Neural Network encoder and the RankNet ranking model 
which could be trained in an end-to-end fashion.
We prove a general result justifying the linear test-time complexity of pairwise Learning to Rank approach.
Experiments on the OHSUMED dataset show that ConvRankNet outperforms systematically existing feature-based models.

\end{abstract}

\section{Introduction}
\label{sec:introduction}

This paper considers the problem of document ranking in information retrieval systems by Learning to Rank.
Traditionally, people used to hand-tune ranking models such as TF-IDF or Okapi BM25 \citep{Manning:2008:IIR:1394399}
which is not only time-inefficient but also tedious.
Learning to Rank, on the other hand, aims to fit automatically the ranking model using machine learning techniques.
In recent years, Learning to Rank draws much attention and quickly becomes one of the most active research areas in information retrieval.
A number of supervised and semi-supervised ranking models has been proposed and extensively studied. 
We refer to \citep{liu2011learning} for a detailed exposition.

Though successful, these Learning to Rank models are mostly ``feature based''.
In other words, given a query-document pair $(q, d)$,
the inputs to ranking models are vectors of form $v = \Phi(q, d)$ 
where $\Phi$ is a \emph{feature extractor}. 
Some widely used features are such as TF-IDF similarity or PageRank score.
However, feature based models suffers from many problems in practice.
On the one hand, feature engineering is generally non-trivial and requires many trial-and-error before finding distinctive features;
on the other hand, the computation of $\Phi(q,d)$ could be very challenging in a real use case.
For instance, in the popular Elasticsearch \citep{Gormley:2015:EDG:2904394}, 
there is no direct way to calculate only IDF (though the product TF$\cdot$IDF is readily available).

In this paper, we show how raw query-document pairs could be directly used to fit an existing feature-based ranking model.
We develop ConvRankNet, a strong Learning to Rank framework composed of: (1) Siamese Convolutional Neural Network (CNN) encoder, 
a module designed to, 
given query $q$ and two documents $d_i, d_j$, extract automatically feature vectors $\Phi(q, d_i)$ and $\Phi(q, d_j)$ and 
(2) RankNet, a successful three-layer neural network-based pairwise ranking model.
We prove also a general result justifying the linear test-time complexity of pairwise Learning to Rank approaches.
Our experiments show that ConvRankNet improves significantly state-of-the-art feature based ranking models.

\section{Related Work}
\label{sec:related}

Our approach is based on the pairwise Learning to Rank approach \citep{liu2011learning}.

Pairwise approach is extensively studied under supervised setting. 
As its name suggests, it takes a pair of documents $d_i$ and $d_j$ as input and is trained to predict if $d_i$ is more relevant than $d_j$.
\cite{Joachims:2002:OSE:775047.775067} uses ``clickthough log'' to infer pairwise preference and 
trains a linear Support Vector Machines (SVM) on the difference of feature vectors $v(q,d_i)-v(q,d_j)$.
\cite{Burges:2005:LRU:1102351.1102363} introduce a three-layer Siamese neural network with a probabilistic cost function 
which can be efficiently optimized by gradient descent.
Instead of working with non-smooth cost function, 
\cite{NIPS2006_2971} propose LambdaRank which model directly the gradient of an implicit cost function.
\cite{burges2010ranknet} introduces LambdaMART which is the boosted tree version of LambdaRank. 
LambdaMART is generally considered as the state-of-the-art supervised ranking model.

Under semi-supervised setting, however, there is considerably fewer work. 
\cite{Szummer:2011:SLR:2063576.2063620} make use of unlabeled data by breaking the cost function $C$ into two parts: 
$C_{l}$ that depends only on labeled data and $C_{u}$ that depends only on unlabeled ones.
They report a statistically significant improvement over its supervised counterpart on some benchmark datasets.

Recent years, CNN \citep{Krizhevsky:2012:ICD:2999134.2999257} achieves impressive performance
on many domains, including Natural Language Processing (NLP) \citep{DBLP:journals/corr/Kim14f}. 
It is shown that CNN is able to efficiently learn to embed sentences into low-dimensional vector
space on preserving important syntactic and semantic aspects. 
Moreover, CNN is able to be trained in an end-to-end fashion, \emph{i.e.}\ little preprocessing and feature engineering are required.
Therefore, people attempt to adapt CNN to build an end-to-end Learning to Rank model.

\cite{Severyn:2015:LRS:2766462.2767738} combine a CNN with a pointwise\footnote{
Pointwise approach treats the feature vector of query-document pairs $\phi(q, d)$ independently. 
and considers a regression or multi-class problem on the relevance $r$.}
model to rank short query-text pairs
and report state-of-the-art result on several Text Retrieval Conference (TREC) tracks.
Though successful, their approach has several drawbacks: (1)
both query and document are limited to a single sentence;
(2) the underlying Learning to Rank approach is pointwise, 
which is rarely used in practice.
Moreover, it is difficult to take advantage of pairwise preference 
provided by clickthough log and thus not practical for a real use case;
(3) the add of ``additional features'' to the join layer is questionable. 
Indeed, the method is claimed to be end-to-end, 
but additional features could be so informative that the feature maps learned by CNN
do not play an import role.

We are thus motivated to generalize the idea in \cite{Severyn:2015:LRS:2766462.2767738} 
and to build a real end-to-end framework whose underlying ranker is a successful Learning to Rank model.

\section{ConvRankNet}
\label{sec:convranknet}

Throughout the rest of paper, we denote $\mathcal{Q}$ the set of queries and $\mathcal{D}$ the set of documents. 
Given $q\in\mathcal{D}$, note $\mathcal{D}_q\subset\mathcal{D}$ the set of documents which ``match''\footnote{For example, $\mathcal{D}_q$ could
be the set of documents sharing at least one token with $q$.}$q$.
For $d_i, d_j\in\mathcal{D}_q$, we write $d_i\succ d_j$ if $d_i$ is more relevance than $d_j$ ($d_i\prec d_j$ is defined similarly) and 
$d_i\sim d_j$ if there is a tie.
Note further $p:\mathcal{D}_q\times\mathcal{D}_q\to\{-1, 0, 1\}$ the pairwise preference such that 
\begin{equation}
    p(d_i, d_j) = \begin{cases}
                    -1, &\textrm{ if } d_i \prec d_j\\
                    0, &\textrm{ if } d_i \sim d_j \\
                    +1, &\textrm{ if } d_i \succ d_j
                \end{cases}
                \label{eqn:preference-function}
\end{equation}

In the following we describe our system ConvRankNet for ranking short query-text pairs in an end-to-end way 
which consists of 
(1) Siamese CNN Encoder, a module designed to extract automatically feature vectors from query and text and 
(2) RankNet, the underlying ranking model.
Figure \ref{fig:convranknet} gives an illustration of ConvRankNet.

\begin{figure}[htbp!]
    \includegraphics[width=\textwidth]{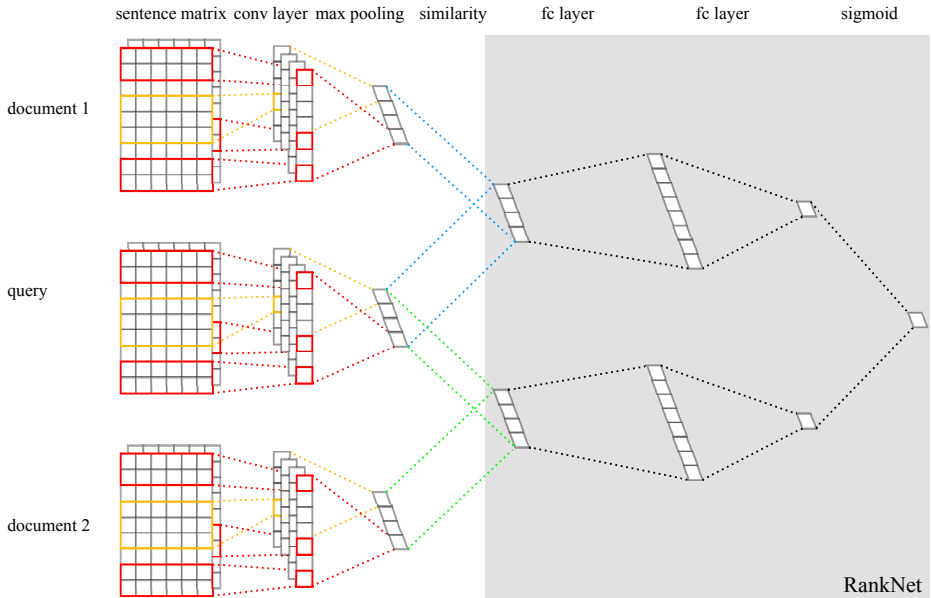}
    \caption{ConvRankNet structure.}
    \label{fig:convranknet}
\end{figure}

\subsection{Siamese CNN Encoder}

The Siamese CNN Encoder extracts feature vectors.
As shown in Figure \ref{fig:convranknet}, the encoder consists of three sub-networks sharing the same weights (a.k.a.\ Siamese network \citep{NIPS1993_769}).
It is made up of the following major components: 
sentence matrix, convolution feature maps, activation units, pooling layer and similarity measure.

\paragraph{Sentence Matrix}
Given a sentence $s = w_1\ldots w_N$,
the sentence matrix $S\in\mathbb{R}^{N \times d}$ is such that each row is the embedding of a word (or $n$-gram)
into a $d$-dimensional space by looking up a pre-trained word embedding model.

\paragraph{Convolution Feature Maps, Activation and Pooling}
Convolutional layer is used to extract discriminative patterns in the input sentence.
A $2d$-filter of size $m\times d$ is applied on a sliding window of $m$ rows of $S$ 
representing $m$ consecutive words (or $n$-grams). 
Note that the filter is of the same width $d$ as the sentence matrix, 
therefore, a column vector $v\in\mathbb{R}^{N+m-1}$ is produced.
Formally, the $i$-th component of $c$ is such that 
\begin{equation}
    v_i = f*S_{i:i+m-1} = \sum (f\odot S_{i:i+m-1}) + b
\end{equation}
where $b$ is a bias.
An non-linear activation unit is applied element-wise on $v$ which permits the network to learn non-linearity.
A number of activation units are widely used in many settings, 
in the scope of ConvRankNet, the rectified linear (ReLU) function is privileged.
The output of activation unit is further passed to a max-pooling layer. 
In other words, $v$ is represented by $\|\textrm{ReLU}(v)\|_{\infty}$. 
In practice, a set of filters of different size $\{f_1, \ldots, f_n\}$ are used to produce feature maps
$\{v_1, \ldots, v_n\}$. 
Each $v_i$ is passed individually through the activation unit and max pooling layer so that 
we have a vector
\begin{equation}
    \left\{\|\textrm{ReLU}(v_1)\|_{\infty}, \ldots,  \|\textrm{ReLU}(v_n)\|_{\infty}\right\}
\end{equation}
in the end.

\paragraph{Similarity Mesure}
Given $q\in\mathcal{Q}, d_i, d_j\in\mathcal{D}_q$, the encoder produces three vectors $v_q, v_{d_i}$ and $v_{d_j}$ respectively.
In order to feed RankNet, two feature vectors $\Phi(v_q, v_{d_i})$ and $\Phi(v_q, v_{d_j})$ need to be further computed.
\cite{Severyn:2015:LRS:2766462.2767738} introduce a similarity matrix $M$ and defines 
$\Phi(q, d) = v_q^TMv_d$. 
However, such a choice is difficult to be fitted in a modern deep learning framework.
In ConvRankNet, we choose instead
\begin{equation}
    \Phi(v_q, v_d) = (v_q - v_d)^2
\end{equation}
where the square is element-wise.

The output of Siamese CNN Encoder, $\Phi(v_q, v_{d_i})$ and $\Phi(v_q, v_{d_j})$, are then piped to a standard RankNet.
We privilege RankNet for its simple implementation and yet impressive performance on benchmark datasets.
Our idea, however, is applicable to any pairwise ranking model.

\subsection{RankNet}

Proposed in \citep{Burges:2005:LRU:1102351.1102363}, 
RankNet quickly becomes a popular ranking model and is deployed in commercial search engines such as Microsoft Bing.
It is well studied in the literature. For sake of completeness, however, we describe briefly here its structure. 

For $d_i, d_j\in\mathcal{D}_q$, suppose that there exists a deterministic target probability 
$\bar{P}(d_i, d_j) = \mathbb{P}(d_i\succ d_j):=\bar{P}_{ij}$
such that 
\begin{equation}
    \bar{P}_{ij} = \frac{1+p(d_i, d_j)}{2}\in\{0, 0.5, 1\}.
\end{equation}
The objective is to learn a posterior probability distribution $P$ that is ``close'' to $\bar{P}$. 
A natural measure of closeness between probability distribution is the binary cross-entropy 
\begin{equation}
    C(d_i, d_j) = -\bar{P}_{ij}\log P_{ij} - (1-\bar{P}_{ij})\log(1-P_{ij}) := C_{ij},
    \label{eqn:general-loss}
\end{equation}
which is minimized when $P = \bar{P}$.
The posterior $P$ is modeled by the Bradley-Terry model
\begin{align}
    P(d_i\succ d_j) &= \frac{1}{1+\exp(-s_{ij})}\\
    P(d_i\prec d_j) &= 1 - P(d_i\succ d_j) = \frac{1}{1+\exp(s_{ij})}.
\end{align}
where $s$ is a score function $s: \mathcal{D}\to\mathbb{R}$ and $s_i = s(d_i)$ and $s_{ij} = s(d_i) - s(d_j)$. 

Under this assumption, the loss function (\ref{eqn:general-loss}) further becomes
\begin{equation}
    C_{ij} = -\bar{P}_{ij}s_{ij} + \log(1+\exp(s_{ij}))
    \label{eqn:ranknet-loss}
\end{equation}
$C_{ij}$ being convex, it is straightforward to optimize it by gradient descent. 

Since $d_i\succ d_j$ iff.\ $d_j \prec d_i$, without loss of generality we suppose 
that for $(d_i, d_j)$ we always have $d_i \succeq d_j$.
Moreover, \cite{Burges:2005:LRU:1102351.1102363} show that training on ties 
makes little difference. 
Therefore, we could consider only document pairs $(d_i, d_j)$ such that $d_i \succ d_j$. 

\subsection{Time Complexity}

In this section we discuss the time complexity of general pairwise Learning to Rank models (and in particular, ConvRankNet).

In pairwise approach, 
we generally consider a bivariate function $h: \mathcal{D}_q\times \mathcal{D}_q \to \mathbb{R}, 
(d_i, d_j) \mapsto h(d_i, d_j)$ such that $d_i \succ d_j$ iff.\ $h(d_i, d_j) > 0$.
$h$ is then used to construct the cost function.

It is clear that the training time complexity is $\mathcal{O}(|\mathcal{D}|^2)$
since every pair $(d_i, d_j)$ 
such that $p(d_i, d_j)\neq 0$ has to be considered. 
One may infer that the test cost is also quadratic 
since we have to evaluate $h$ on the collection of all possible pairs on test data 
and construct a consistent total order (For example, $h(d_1, d_2) > 0, h(d_2, d_3) > 0$ induces the total order $d_1 \succ d_2 \succ d_3$).

However, we argue that under a very loose assumption, 
a linear time\footnote{Here we do not take into account the sort of score, which is in general $\mathcal{O}(n\log n)$. 
However, if the score is of fix precision, on could use \emph{e.g.}\ Radix sort to achieve linear time complexity.} 
$\mathcal{O}(|\mathcal{D}|)$ actually suffices for constructing the total order. 
First recall the following result in graph theory:
\begin{lemma}
    The \emph{topological sort} $r$ of a directed graph $\mathcal{G}$ is \emph{an} order of 
    vertices such that all edges of $\mathcal{G}$ go from left to right in the order.
    It is shown that
    \begin{itemize}
        \item $\mathcal{G}$ is a directed acyclic graph (DAG) iff.\
    there exists a topological sort on $\mathcal{G}$ \citep{Skiena:2008:ADM:1410219}.
        \item the topological sort is unique iff.\ 
    there is a directed edge between each pair of consecutive vertices 
    in the topological order (\emph{i.e.}\ $\mathcal{G}$ has a Hamiltonian path) 
    \citep{Sedgewick:2011:ALG:2011916}.
\end{itemize}
    \label{lem:topological-sort}
\end{lemma}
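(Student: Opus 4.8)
The plan is to prove the two bullet points separately and, in each case, to establish both implications by elementary arguments rather than merely invoking \citep{Skiena:2008:ADM:1410219,Sedgewick:2011:ALG:2011916}; the only structural facts needed are the definition of a topological order and the finiteness of the vertex set.

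For the first bullet, the forward implication (a topological sort exists $\Rightarrow$ DAG) is immediate: along any directed walk the position of a vertex in the order is strictly increasing, so a directed cycle would force a vertex to precede itself, which is impossible. For the converse I would induct on the number of vertices. The crucial observation is that every nonempty DAG has a \emph{source}, a vertex of in-degree $0$: if every vertex had an incoming edge, following edges backwards would, by finiteness, revisit some vertex and exhibit a cycle. Put a source $v$ first, delete it (the remainder is still acyclic), topologically sort the remainder by the induction hypothesis, and prepend $v$; no edge enters $v$ and every edge out of $v$ goes to a later vertex, so the concatenation is a valid topological order.

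For the second bullet, write the order as $v_1,\dots,v_n$. If $\mathcal{G}$ contains all the edges $v_i\to v_{i+1}$ — equivalently, the Hamiltonian path $v_1\to\cdots\to v_n$ — then any topological order must respect each such edge and hence must coincide with $v_1,\dots,v_n$, so it is unique. Conversely, assume some consecutive pair admits no edge $v_i\to v_{i+1}$ (there is no edge $v_{i+1}\to v_i$ either, since that would already contradict the given order). Transposing $v_i$ and $v_{i+1}$ produces another linear order that is still topological: edges not incident to $\{v_i,v_{i+1}\}$ are unaffected, edges entering this pair from an earlier vertex and edges leaving it to a later vertex still point forward, and there is no edge inside the pair to violate. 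This second order is distinct from the first, contradicting uniqueness, so all the consecutive edges must be present.

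The one place that calls for care is the transposition argument in the converse of the second bullet: one must check exhaustively that swapping two adjacent vertices with no edge between them cannot reverse any edge of $\mathcal{G}$. This is a short case analysis on where the other endpoint of an edge incident to $v_i$ or $v_{i+1}$ sits in the order, and every case is routine; the remaining steps are direct unwindings of the definitions.
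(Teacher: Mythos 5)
Your proof is correct, but it takes a different (and more self-contained) route than the paper, which offers no argument at all for this lemma: it simply cites \citep{Skiena:2008:ADM:1410219} for the first bullet and \citep{Sedgewick:2011:ALG:2011916} for the second and moves on. What you supply is the standard textbook derivation that those citations point to, and all of its pieces check out. For existence, the source-extraction induction is sound, and you correctly isolate the one nontrivial ingredient, namely that a finite digraph in which every vertex has positive in-degree must contain a directed cycle (walking backwards must eventually revisit a vertex). For uniqueness, the forward direction is the right one-line observation that the edges $v_i\to v_{i+1}$ pin down every topological order, and the converse via transposing an adjacent non-edge pair is the standard exchange argument; your case analysis on where the other endpoint of an incident edge lies is exactly the check that needs to be done, including the remark that an edge $v_{i+1}\to v_i$ is already excluded by the original order being topological. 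The trade-off is the obvious one: the paper keeps the lemma as a black-box import so as not to interrupt the flow toward Theorem \ref{thm:pairwise-linear-time-testing}, while your version makes the paper self-contained at the cost of a page of routine graph theory. If the lemma were to be proved in the paper, your argument is the one to include.
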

We have then the following result:
\begin{theorem}
    Suppose that the hypothesis $h$ has no tie, \emph{i.e.}\ $h(d_i, d_j)\in\mathbb{R}_+^*$.
    If there exists $\psi: \mathbb{R}\to\mathbb{R}, f: \mathbb{R}\to\mathbb{R}$ such that 
    $h(d_i, d_j) = \psi\circ (f(d_i) - f(d_j))$ and $h(d_i, d_j)>0$ iff.\ $f(d_i) > f(d_j)$, 
    then the total order defined by $f(\cdot)$ on $\mathcal{D}_q$ is the same as that of $h(\cdot,\cdot)$ 
    on $\mathcal{D}_q\times \mathcal{D}_q$.
    \label{thm:pairwise-linear-time-testing}
\end{theorem}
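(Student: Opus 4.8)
\noindent\emph{Proof proposal.}
The plan is to read hypothesis~(2) as saying that $h$ and $f$ induce \emph{literally the same} strict preference relation on $\mathcal{D}_q$, and then to invoke Lemma~\ref{lem:topological-sort} to see that this common relation determines a single order — which is the ``order defined by $h$'' obtained by topological sort, and which is visibly the decreasing-$f$ order. Concretely, I would set up the directed graph $\mathcal{G}$ on vertex set $\mathcal{D}_q$ with an edge $d_i\to d_j$ exactly when $h(d_i,d_j)>0$; by construction a topological sort of $\mathcal{G}$ is precisely an ``order consistent with $h$'', so the statement amounts to: $\mathcal{G}$ has a unique topological sort, and it lists $\mathcal{D}_q$ by strictly decreasing $f$-value. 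I will use throughout that $f$ is injective on $\mathcal{D}_q$ — this is part of what the no-tie hypothesis is meant to provide (it is immediate when $\psi(0)=0$, the natural normalisation), and it is in any case needed for ``the total order defined by $f$'' to be a genuine total order.

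The argument then has three short steps, carried out in this order. (1) $\mathcal{G}$ is acyclic: a cycle $d_{i_1}\to\dots\to d_{i_k}\to d_{i_1}$ would, by the equivalence $h(d_i,d_j)>0 \iff f(d_i)>f(d_j)$, give $f(d_{i_1})>\dots>f(d_{i_k})>f(d_{i_1})$, impossible in $\mathbb{R}$ — this is the one place where the decomposition $h=\psi\circ(f(\cdot)-f(\cdot))$ really matters, since without a shared scale $f$ a pairwise comparator $h$ could be cyclic like rock--paper--scissors; hence by the first bullet of Lemma~\ref{lem:topological-sort}, $\mathcal{G}$ admits a topological sort. (2) $\mathcal{G}$ is a tournament: for distinct $d_i,d_j$ the no-tie hypothesis gives $h(d_i,d_j)\ne 0\ne h(d_j,d_i)$; they cannot both be positive (that would force $f(d_i)>f(d_j)$ and $f(d_j)>f(d_i)$), and by injectivity of $f$ they cannot both be negative either (that would force $f(d_i)=f(d_j)$); so exactly one is positive, so $\mathcal{G}$ has a Hamiltonian path, and by the second bullet of Lemma~\ref{lem:topological-sort} the topological sort is unique. (3) The listing of $\mathcal{D}_q$ by strictly decreasing $f$ is a topological sort of $\mathcal{G}$: every edge $d_i\to d_j$ means $h(d_i,d_j)>0$, i.e.\ $f(d_i)>f(d_j)$, so all edges run left to right. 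By the uniqueness from step~(2) it is \emph{the} order defined by $h$, which is the claim — and this is exactly what permits replacing the $\mathcal{O}(|\mathcal{D}|^2)$ evaluation of $h$ on all pairs by an $\mathcal{O}(|\mathcal{D}|)$ evaluation of $f$ followed by a sort.

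I do not expect a genuine obstacle: once ``$d_i\succ d_j\iff h(d_i,d_j)>0$'' and hypothesis~(2) are placed side by side, the proof is bookkeeping around Lemma~\ref{lem:topological-sort}. The two spots where a hasty write-up could slip are the ones flagged above — one must use the \emph{uniqueness} half of the lemma, not merely existence, so that ``the order defined by $h$'' is well posed at all; and one must make explicit that $f$ is injective on $\mathcal{D}_q$, so that ``the total order defined by $f$'' is genuinely a total order and the asserted equality of the two orders is not vacuous.
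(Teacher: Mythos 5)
Your proposal is correct and follows essentially the same route as the paper's proof: build the directed graph induced by $h$, show it is a DAG via the shared score $f$, invoke Lemma~\ref{lem:topological-sort} for existence and (using the no-tie hypothesis to get a Hamiltonian path) uniqueness of the topological sort, and identify that sort with the decreasing-$f$ order. Your write-up is in fact slightly more careful than the paper's, spelling out the tournament property and the injectivity of $f$ on $\mathcal{D}_q$, which the paper leaves implicit.
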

\begin{proof}
    Consider the graph $\mathcal{G}=(\mathcal{V}, \mathcal{E})$ 
    induced by $h(\cdot, \cdot)$, \emph{i.e.}\ $(i, j)\in\mathcal{E}$ iff.\  $h(d_i, d_j) > 0$.
    Remark first that $\mathcal{G}$ is a DAG. If not, there exists $d_{i_1}\ldots d_{i_n}\in\mathcal{D}$ such that 
    $d_{i_1}\succ d_{i_2}\succ\ldots\succ d_{i_n}\succ d_{i_1}$. 
    Then $f(d_{i_1})> f(d_{i_2})>\ldots> f(d_{i_n})>f(d_{i_1})$, a contradiction. 
    By Lemma \ref{lem:topological-sort}, there exists a topological sort. 
    Without loss of generality note the sort $d_q = \{d_1, \ldots, d_n\}$.
    Since $h$ has no tie, $d_q$ is an Hamiltonian path, thus the topological sort is furthermore unique.
    It is easy to see that $r$ it is nothing but the sort with respect to $f(\cdot)$.
\end{proof}
\begin{corollary}
    ConvRankNet has linear test time.
\end{corollary}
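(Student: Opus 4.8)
The plan is to reduce the claim to Theorem~\ref{thm:pairwise-linear-time-testing} by exhibiting the pairwise hypothesis of ConvRankNet in the required separable form. First I would recall that at test time ConvRankNet declares $d_i \succ d_j$ exactly when the RankNet posterior exceeds $1/2$, i.e.\ when $\ell(s_{ij}) > 1/2$ with $\ell(x) = 1/(1+\exp(-x))$ the logistic function and $s_{ij} = s(d_i) - s(d_j)$, where $s(d) = \textrm{RankNet}(\Phi(v_q, v_d))$ is the real-valued score produced by the scoring subnetwork applied to the Siamese CNN encoder output. Thus the natural bivariate hypothesis is $h(d_i, d_j) = \ell\bigl(s(d_i) - s(d_j)\bigr) - \tfrac{1}{2}$, which is of the form $\psi \circ (f(d_i) - f(d_j))$ with $f = s$ and $\psi(x) = \ell(x) - \tfrac{1}{2}$. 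Since $\ell$ is strictly increasing with $\ell(0) = \tfrac{1}{2}$, we have $\psi(x) > 0 \iff x > 0$, so the equivalence $h(d_i,d_j) > 0 \iff f(d_i) > f(d_j)$ demanded by the theorem holds.

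Next I would verify the no-tie hypothesis. Two distinct documents collide only if $s(d_i) = s(d_j)$; since $s$ is the composition of the encoder, the element-wise map $\Phi(v_q,v_d)=(v_q-v_d)^2$ and a ReLU scoring network, distinct inputs generically yield distinct scores, and in any event ConvRankNet is trained and evaluated on strict pairs only, matching the convention under which the theorem was stated. With the no-tie condition in hand, Theorem~\ref{thm:pairwise-linear-time-testing} applies and gives that the total order induced by $h(\cdot,\cdot)$ on $\mathcal{D}_q \times \mathcal{D}_q$ coincides with the order induced by the univariate score $s(\cdot)$ on $\mathcal{D}_q$. This is the step I expect to need the most care: making the ``no tie'' assumption genuinely rigorous rather than merely generic; if a deterministic guarantee is wanted one simply breaks ties by a fixed rule (e.g.\ document identifier), which affects nothing that follows.

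Finally I would read off the complexity. Producing the ranking no longer requires materializing the $\mathcal{O}(|\mathcal{D}_q|^2)$ pairs: it suffices to compute $s(d)$ once for each $d \in \mathcal{D}_q$ --- a single forward pass through the fixed-size encoder and scoring network, hence $\mathcal{O}(1)$ per document --- and then sort the resulting $|\mathcal{D}_q|$ real numbers. The total cost is therefore $\mathcal{O}(|\mathcal{D}_q|)$ up to the sorting term already set aside in the discussion preceding the theorem, which is exactly the assertion that ConvRankNet has linear test time.
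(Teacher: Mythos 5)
Your proposal is correct and matches the paper's intent exactly: the paper states this corollary without proof as an immediate consequence of Theorem~\ref{thm:pairwise-linear-time-testing}, and your instantiation with $f = s$ (the RankNet score) and $\psi(x) = 1/(1+\exp(-x)) - \tfrac{1}{2}$ is precisely the application the authors have in mind. Your explicit treatment of the no-tie hypothesis (with a deterministic tie-breaking fallback) is more careful than anything in the paper itself.
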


\section{Experimental Results}
\label{sec:experiments}

In this section we evaluate ConvRankNet on standard benchmark datasets and compare it with standard RankNet and LambdaRank.

\subsection{Datasets}

Since ConvRankNet is an end-to-end model, 
we need datasets to which we have access to raw query and documents.

To the best of our knowledge, OHSUMED dataset\footnote{\url{http://mlr.cs.umass.edu/ml/machine-learning-databases/ohsumed/}} is the only freely available dataset. 
Subset of MEDLINE (a database on medical publications),
it consists of $106$ queries on $348566$ medical documents during 1987-1991. 
The relevance of $16140$ query-document pairs are provided by human assessors on three levels: 
non-relevant (n), partially relevant (p) and definitely relevant (d). 
In particular, non-relevant pairs are explicitly provided. 
For historical reasons, each query-document pair is judged independently by up to $3$ assessors. 
To avoid ambiguity, the highest relevance label is taken in our experiments.
\begin{table}[htbp!]
    \centering
    \caption{Partition of OHSUMED dataset.}
    \label{tab:partition-ohsumed}
    \begin{tabular}{cccccc}
    \hline\noalign{\smallskip}
    &$S_1$&$S_2$&$S_3$&$S_4$&$S_5$\\
    \noalign{\smallskip}\hline\noalign{\smallskip}
    Query id & 1-21 & 22-42 & 43-63 & 64-84 & 85-106\\
    \noalign{\smallskip}\hline
    \end{tabular}
\end{table}

To compare with classical feature-base models, 
we also use a synthesized version (where only feature vectors $\Phi(q,d)$ are accessible) of OHSUMED which is included in Microsoft's LETOR 3.0 dataset
\footnote{\url{https://www.microsoft.com/en-us/research/project/letor-learning-rank-information-retrieval/}.}.
As in LETOR 3.0, we partition raw OHSUMED dataset into $5$ folds as shown in Table \ref{tab:partition-ohsumed}.

\subsection{Experimental Setup}

All models are implemented in PyTorch framework.

In general, query and documents are not of the same length.
Though PyTorch uses dynamic graph and is capable of handling texts of various lengths, 
one could only train the network one query-document-document triple a time.
In order to perform batch training, both query and document are truncated to $100$ words with zero-padding.

We use ConceptNet Numberbatch\footnote{\url{https://github.com/commonsense/conceptnet-numberbatch}}\cite{DBLP:journals/corr/SpeerCH16} 
as the default word embedding.
Part of the ConceptNet open data project, 
ConceptNet Numberbatch consists of state-of-the-art semantic vectors that can be used directly as representation of word meanings.
It is built using an ensemble that combines data from ConceptNet, 
word2vec, GloVe, and OpenSubtitles 2016, using a variation on retrofitting.
Several benchmarks show that ConceptNet Numberbatch outperfoms word2vec 
and GloVe.
ConceptNet Numberbatch includes not only vectorization of single word but also that of some bigrams and trigrams.
Bigrams and trigrams are semantically more informative. 
For example, ``la carte'' is clearly better characterized than ``la'' and ``carte'' separately.
To exploit $n$-grams, we use a greedy approach in mapping text to vector matrix, 
\emph{i.e.}\ we extend as long as possible the $n$-gram to map to vector. 
For example, the following toy ``sentence'' \texttt{hello world peace}
would be segmented as $[\texttt{hello world}, \texttt{peace}]$, 
even though $[\texttt{hello}, \texttt{world peace}]$ is also possible.
One possible extension to our work is then to find the semantically optimal segmentation of sentence.
In our network, unknown words are mapped to a random vector and zero padding is mapped to zero vector.

Three different filters of size $3\times 300, 4\times 300, 5\times 300$, with $10$ copies each, are used for the convolution layer
so that the input for RankNet is a $30$-dimensional vector. 
In order to prevent overfitting, during training stage a drop-out layer\cite{Srivastava:2014:DSW:2627435.2670313} 
with $p=0.5$ is used after the max-pooling layer.

RankNet, LambdaRank and ConvRankNet are all trained for $500$ epochs with learning rate $0.00001, 0.001, 0.001$ respectively.
All tests were performed on a Ubuntu 16.04.4 server with  
$2\times$ Xeon $3.2$GHz CPU, $256$GB RAM and Tesla P100 $16$GB GPU.
Cross-validations are run in parallel with the help of GNU Parallel \citep{Tange2011a}.

Normalized Discounted Cumulative Gain at truncation level $k$ (NDCG@$k$) is used as the evaluation measure. 
NDCG@$k$ is a multi-level ranking quality measure widely used in previous work. 
It ranges from $0$ to $1$ with $1$ for the \emph{perfect ranking}. We refer to \citep{Manning:2008:IIR:1394399} for a detailed presentation.

\subsection{Results}
\begin{table}[htbp!]
    \centering
    \caption{$5$-fold cross-validation of NDCG@$k$ on test set. }
    \label{tab:ranknet-convranknet-benchmark}
    \begin{tabular}{llllll}
        \hline\noalign{\smallskip}
        &  NDCG@1 &  NDCG@2 &  NDCG@3 &  NDCG@4 &  NDCG@5 \\
        method &         &         &         &         &         \\
        \noalign{\smallskip}\hline\noalign{\smallskip}
        ConvRankNet &  0.5479 &  0.5265 &  \textbf{0.5204} &  \textbf{0.5241} &  \textbf{0.5204} \\
        LambdaRank &  0.5677 &  \textbf{0.5267} &  0.4942 &  0.4884 &  0.4780 \\
        RankNet &  \textbf{0.5737} &  0.5362 &  0.5128 &  0.4898 &  0.4746 \\
        \noalign{\smallskip}\hline
    \end{tabular}
    \begin{tabular}{llllll}
        \hline\noalign{\smallskip}
        &  NDCG@6 &  NDCG@7 &  NDCG@8 &  NDCG@9 &  NDCG@10 \\
        method &         &         &         &         &          \\
        \noalign{\smallskip}\hline\noalign{\smallskip}
        ConvRankNet &  \textbf{0.5179} &  \textbf{0.5109} &  \textbf{0.5139} &  \textbf{0.5122} &   \textbf{0.5132} \\
        LambdaRank &  0.4681 &  0.4604 &  0.4552 &  0.4553 &   0.4503 \\
        RankNet &  0.4648 &  0.4608 &  0.4560 &  0.4493 &   0.4461 \\
        \noalign{\smallskip}\hline
    \end{tabular}
\end{table}

$5$-fold cross-validation is performed on both datasets. 
Table \ref{tab:ranknet-convranknet-benchmark}
reports NDCG@$k$ at all truncation levels. 
It is clear that ConvRankNet outperforms RankNet and LambdaRank especially for large truncation level $k$.

\begin{table}[htbp!]
    \centering
    \caption{$p$-value of two-tailed Wilcoxon signed-rank test.}
    \label{tab:ohsumed-p-value}
    \begin{tabular}{llll}
        \hline\noalign{\smallskip}
        & RankNet & LambdaRank & ConvRankNet\\
        \noalign{\smallskip}\hline\noalign{\smallskip}
        RankNet & - & 0.575 & 0.021\\
        LambdaRank & - &- & 0.012\\
        ConvRankNet & - & - &-\\
        \noalign{\smallskip}\hline
    \end{tabular}
\end{table}
A two tailed Wilcoxon signed-rank test \citep{citeulike:14057436} is performed on these values. 
As we can see from Table \ref{tab:ohsumed-p-value}, the improvement of ConvRankNet over RankNet and LambdaRank 
is statistically significant at $5\%$ level.
Moreover, according to \citep{liu2011learning}, ConvRankNet also outperfoms systematically existing methods.

\section{Discussion and Conclusions}
\label{sec:discussion}

In this paper, we proposed ConvRankNet, an end-to-end Learning to Rank model which directly takes raw query and documents as input. 
ConvRankNet is shown to have linear test time and thus applicable in real-time use cases.
Our results indicate that it outperforms significant existing methods on OHSUMED dataset.

Future work could aim to study the generalization of the underlying RankNet module to other stronger neural network based model (such as LambdaRank).

\section*{Acknowledgements}
    The author would like to thank Dr.\ Michalis \textsc{Vazirigiannis} of \'Ecole Polytechnique for his valuable suggestions.

    The author would also like to thank Mr.\ Geoffrey \textsc{Scoutheeten} and Mr.\ \'Edouard d'\textsc{Archimbaud} of BNP Paribas for their valuable comments.

    The work is supported by the Data \& AI Lab of BNP Paribas.

\bibliographystyle{unsrtnat}
\bibliography{reference}   

\begin{thebibliography}{18}
\providecommand{\natexlab}[1]{#1}
\providecommand{\url}[1]{\texttt{#1}}
\expandafter\ifx\csname urlstyle\endcsname\relax
  \providecommand{\doi}[1]{doi: #1}\else
  \providecommand{\doi}{doi: \begingroup \urlstyle{rm}\Url}\fi

\bibitem[Manning et~al.(2008)Manning, Raghavan, and
  Sch\"{u}tze]{Manning:2008:IIR:1394399}
Christopher~D. Manning, Prabhakar Raghavan, and Hinrich Sch\"{u}tze.
\newblock \emph{Introduction to Information Retrieval}.
\newblock Cambridge University Press, New York, NY, USA, 2008.
\newblock ISBN 0521865719, 9780521865715.

\bibitem[Liu(2011)]{liu2011learning}
Tie-Yan Liu.
\newblock \emph{Learning to Rank for Information Retrieval.}
\newblock Springer, 2011.
\newblock ISBN 978-3-642-14266-6.

\bibitem[Gormley and Tong(2015)]{Gormley:2015:EDG:2904394}
Clinton Gormley and Zachary Tong.
\newblock \emph{Elasticsearch: The Definitive Guide}.
\newblock O'Reilly Media, Inc., 1st edition, 2015.
\newblock ISBN 1449358543, 9781449358549.

\bibitem[Joachims(2002)]{Joachims:2002:OSE:775047.775067}
Thorsten Joachims.
\newblock Optimizing search engines using clickthrough data.
\newblock In \emph{Proceedings of the Eighth ACM SIGKDD International
  Conference on Knowledge Discovery and Data Mining}, KDD '02, pages 133--142,
  New York, NY, USA, 2002. ACM.
\newblock ISBN 1-58113-567-X.
\newblock \doi{10.1145/775047.775067}.
\newblock URL \url{http://doi.acm.org/10.1145/775047.775067}.

\bibitem[Burges et~al.(2005)Burges, Shaked, Renshaw, Lazier, Deeds, Hamilton,
  and Hullender]{Burges:2005:LRU:1102351.1102363}
Chris Burges, Tal Shaked, Erin Renshaw, Ari Lazier, Matt Deeds, Nicole
  Hamilton, and Greg Hullender.
\newblock Learning to rank using gradient descent.
\newblock In \emph{Proceedings of the 22Nd International Conference on Machine
  Learning}, ICML '05, pages 89--96, New York, NY, USA, 2005. ACM.
\newblock ISBN 1-59593-180-5.
\newblock \doi{10.1145/1102351.1102363}.
\newblock URL \url{http://doi.acm.org/10.1145/1102351.1102363}.

\bibitem[Burges et~al.(2007)Burges, Ragno, and Le]{NIPS2006_2971}
Christopher~J. Burges, Robert Ragno, and Quoc~V. Le.
\newblock Learning to rank with nonsmooth cost functions.
\newblock In P.~B. Sch\"{o}lkopf, J.~C. Platt, and T.~Hoffman, editors,
  \emph{Advances in Neural Information Processing Systems 19}, pages 193--200.
  MIT Press, 2007.
\newblock URL \url{http://papers.nips.cc/paper/2971-learning-to-rank-with-
  nonsmooth-cost-functions.pdf}.

\bibitem[Burges(2010)]{burges2010ranknet}
Christopher J.~C. Burges.
\newblock From {RankNet} to {LambdaRank} to {LambdaMART}: An overview.
\newblock Technical report, Microsoft Research, 2010.
\newblock URL \url{http://research.microsoft.com/en-us/um/people/cburges/
  tech\_reports/MSR-TR-2010-82.pdf}.

\bibitem[Szummer and Yilmaz(2011)]{Szummer:2011:SLR:2063576.2063620}
Martin Szummer and Emine Yilmaz.
\newblock Semi-supervised learning to rank with preference regularization.
\newblock In \emph{Proceedings of the 20th ACM International Conference on
  Information and Knowledge Management}, CIKM '11, pages 269--278, New York,
  NY, USA, 2011. ACM.
\newblock ISBN 978-1-4503-0717-8.
\newblock \doi{10.1145/2063576.2063620}.
\newblock URL \url{http://doi.acm.org/10.1145/2063576.2063620}.

\bibitem[Krizhevsky et~al.(2012)Krizhevsky, Sutskever, and
  Hinton]{Krizhevsky:2012:ICD:2999134.2999257}
Alex Krizhevsky, Ilya Sutskever, and Geoffrey~E. Hinton.
\newblock Imagenet classification with deep convolutional neural networks.
\newblock In \emph{Proceedings of the 25th International Conference on Neural
  Information Processing Systems - Volume 1}, NIPS'12, pages 1097--1105, USA,
  2012. Curran Associates Inc.
\newblock URL \url{http://dl.acm.org/citation.cfm?id=2999134.2999257}.

\bibitem[Kim(2014)]{DBLP:journals/corr/Kim14f}
Yoon Kim.
\newblock Convolutional neural networks for sentence classification.
\newblock \emph{CoRR}, abs/1408.5882, 2014.
\newblock URL \url{http://arxiv.org/abs/1408.5882}.

\bibitem[Severyn and Moschitti(2015)]{Severyn:2015:LRS:2766462.2767738}
Aliaksei Severyn and Alessandro Moschitti.
\newblock Learning to rank short text pairs with convolutional deep neural
  networks.
\newblock In \emph{Proceedings of the 38th International ACM SIGIR Conference
  on Research and Development in Information Retrieval}, SIGIR '15, pages
  373--382, New York, NY, USA, 2015. ACM.
\newblock ISBN 978-1-4503-3621-5.
\newblock \doi{10.1145/2766462.2767738}.
\newblock URL \url{http://doi.acm.org/10.1145/2766462.2767738}.

\bibitem[Bromley et~al.(1994)Bromley, Guyon, LeCun, S\"{a}ckinger, and
  Shah]{NIPS1993_769}
Jane Bromley, Isabelle Guyon, Yann LeCun, Eduard S\"{a}ckinger, and Roopak
  Shah.
\newblock Signature verification using a "siamese" time delay neural network.
\newblock In J.~D. Cowan, G.~Tesauro, and J.~Alspector, editors, \emph{Advances
  in Neural Information Processing Systems 6}, pages 737--744. Morgan-Kaufmann,
  1994.
\newblock URL \url{http://papers.nips.cc/paper/769-signature-verification-
  using-a-siamese-time-delay-neural-network.pdf}.

\bibitem[Skiena(2008)]{Skiena:2008:ADM:1410219}
Steven~S. Skiena.
\newblock \emph{The Algorithm Design Manual}.
\newblock Springer Publishing Company, Incorporated, 2nd edition, 2008.
\newblock ISBN 1848000693, 9781848000698.

\bibitem[Sedgewick and Wayne(2011)]{Sedgewick:2011:ALG:2011916}
Robert Sedgewick and Kevin Wayne.
\newblock \emph{Algorithms}.
\newblock Addison-Wesley Professional, 4th edition, 2011.
\newblock ISBN 032157351X, 9780321573513.

\bibitem[Speer et~al.(2016)Speer, Chin, and
  Havasi]{DBLP:journals/corr/SpeerCH16}
Robert Speer, Joshua Chin, and Catherine Havasi.
\newblock Conceptnet 5.5: An open multilingual graph of general knowledge.
\newblock \emph{CoRR}, abs/1612.03975, 2016.
\newblock URL \url{http://arxiv.org/abs/1612.03975}.

\bibitem[Srivastava et~al.(2014)Srivastava, Hinton, Krizhevsky, Sutskever, and
  Salakhutdinov]{Srivastava:2014:DSW:2627435.2670313}
Nitish Srivastava, Geoffrey Hinton, Alex Krizhevsky, Ilya Sutskever, and Ruslan
  Salakhutdinov.
\newblock Dropout: A simple way to prevent neural networks from overfitting.
\newblock \emph{J. Mach. Learn. Res.}, 15\penalty0 (1):\penalty0 1929--1958,
  January 2014.
\newblock ISSN 1532-4435.
\newblock URL \url{http://dl.acm.org/citation.cfm?id=2627435.2670313}.

\bibitem[Tange(2011)]{Tange2011a}
O.~Tange.
\newblock Gnu parallel - the command-line power tool.
\newblock \emph{;login: The USENIX Magazine}, 36\penalty0 (1):\penalty0 42--47,
  Feb 2011.
\newblock URL \url{http://www.gnu.org/s/parallel}.

\bibitem[Dalgaard(2008)]{citeulike:14057436}
Peter Dalgaard.
\newblock \emph{Introductory Statistics with {R}}.
\newblock Statistics and Computing. Springer, New York, second edition, 2008.
\newblock ISBN 978-0-387-79053-4.
\newblock \doi{10.1007/978-0-387-79054-1}.
\newblock URL \url{http://dx.doi.org/10.1007/978-0-387-79054-1}.

\end{thebibliography}
\end{document}